\title{Quantum Synchronizable Codes From Cyclotomic Classes of Order Two over $\mathbb{Z}_{2q}$}
\author{Tao Wang$^{*}$, Tongjiang Yan, Vladimir Sidorenko, Xueting Wang\\
\small China University of Petroleum\\[-0.8ex]
\small Qingdao, China\\
\small\tt $^*$1409010215@s.upc.edu.com\\
}
\date{June 2021}
\newtheorem{theorem}{Theorem}
\newtheorem{lemma}{Lemma}
\newtheorem{example}{Example}
\newtheorem{definition}{Definition}
\begin{document}
\bibliographystyle{plain}
\maketitle
\begin{abstract}
Quantum synchronizable codes are kinds of quantum error-correcting codes that can not only correct the effects of quantum noise on qubits but also the misalignment in block synchronization. This paper contributes to constructing two classes of quantum synchronizable codes by the cyclotomic classes of order two over $\mathbb{Z}_{2q}$, whose synchronization capabilities can reach the upper bound under certain conditions. Moreover, the quantum synchronizable codes possess good error-correcting capability towards bit errors and phase errors.

{\bf Key words:} Quantum synchronizable codes, Cyclotomic classes, Cyclic codes
\end{abstract}
\section{Introduction}
\label{intro}
\hspace{1.5em}In recent decades, quantum information and quantum computation have experienced remarkable progress. Quantum error-correcting codes are fundamental tool in quantum communication that allows for quantum information processing in a noisy environment \cite{2002QuantumC,1998Good,2020Quantum}. Typically, quantum error-correcting codes are designed to correct the effects of bit errors and phase errors caused by Pauil operator, which correspond to additive noise in classical encoding theory. Misalignment is another type of errors that can also cause catastrophic failure in quantum information transmission. Misalignment can be brought in wherever the information processing device misidentifies the boundaries of an information stream. For instance, suppose that the quantum information which we transmit can be expressed by an ordered sequence of information blocks. Each chunk of information is encoded into a block of three consecutive qubits in a stream of qubits $|q_i\rangle$. If three chunks of information are encoded, we have $9$ ordered qubits $\left(|q_0\rangle|q_1\rangle|q_2\rangle|q_3\rangle|q_4\rangle|q_5\rangle|q_6\rangle|q_7\rangle|q_8\rangle\right)$ where each of the three blocks $\left(|q_0\rangle|q_1\rangle|q_2\rangle\right)$, $\left(|q_3\rangle|q_4\rangle|q_5\rangle\right)$ and $\left(|q_6\rangle|q_7\rangle|q_8\rangle\right)$ forms an information chunk. We assume that the synchronization was established at the beginning of the transmission, but synchronization may be lost during the quantum communication or quantum computation. This type of error occurs when the receiver incorrectly locates the boundary of each block of data by a certain number of positions towards the left or right. For example, the receiver wrongly read out $\left(|q_5\rangle|q_6\rangle|q_7\rangle\right)$ instead of the correct block $\left(|q_6\rangle|q_7\rangle|q_8\rangle\right)$. For more details, see \cite{BOSE1967616,2012Block}.

As a subclass of quantum error-correcting codes, the mainly function of quantum synchronizable codes is to correct the errors caused by quantum noise and block synchronization in the process of quantum information transmission. In order to ensure information security, it is of great significance to study the construction of quantum synchronizable codes \cite{2020Quantum}. In 2013, Fujiwara proposed the framework of quantum block synchronization and gave the first example of quantum synchronizable codes with standard quantum error correction capability and a general construction framework \cite{fujiwara2013algebraic,fujiwara2014quantum}.

In 2014, Xie proposed to use quadratic residual codes and hypercodes to construct quantum synchronizable codes \cite{2014QuantumXie}. The simple construction of the quantum synchronizable codes given by him allows  their synchronization ability to reach the upper limit. After that, in 2019 \cite{2019QuantumLi}, Li constructed two classes of quantum synchronizable codes from cyclotomic classes of order four. It also has been proved that the codes have good error correction performance.  Quantum synchronizable codes were further studied in \cite{XiaLi2021A,2018Non,2016q}.

In this work, we propose a new method to constructing two classes of quantum synchronizable codes. In certain conditions, their misalignment tolerance can reach the limit. Moreover, the quantum synchronizable codes possess good error-correcting capability towards bit errors and phase errors.

\section{Preliminaries}
\label{section1}
\subsection{Cyclic codes and their duals}
\hspace{1.5em}Let $\mathbb{F}_r$ be a finite field with $r$ elements. An $[n,k,d]_r$ linear code $C$ is a $k$-dimensional subspace of the $n$-dimensional vector space $\mathbb{F}_r^n$ with $\mathrm{min}\{\mathrm{wt}(v)|v\in C,v\neq 0 \}=d$, where the weight $\mathrm{wt}(v)$ is the number of nonzero coordinates of $v$. $C$ is called optimal, if its distance reaches the Hamming bound, see e.g., \cite{Brouwer98}. $C$ is called almost optimal if the code with parameters $[n,k,d+1]_r$ reaches Hamming bound. $C$ is a cyclic code if any codeword $c=(c_0,c_1,...,c_{n-1})\in C$ implies $(c_{n-1},c_{0},...,c_{n-2})\in C$. If $C$ can be identified with
\begin{equation*}
	I(C)=\{c_0+c_1x+c_2x^2+...+c_{n-1}x^{n-1}|(c_0,c_1,c_2,...,c_{n-1})\in C\},
\end{equation*}
then $I(C)$ can be regarded as an ideal of the principal ideal ring $ \mathbb{F}_r[x]/{(x^n-1)} $ and generated by the monic polynomial $g(x)$. The Euclidean dual code of $C$ is defined as
\begin{equation*}
	C^\perp=\{x\in \mathbb{F}_r^n|(x,c)=0 , \forall c \in C\},
\end{equation*}
where $(x,c)$ is the Euclidean inner product of $x$ and $c$. $C^\perp$ is also a cyclic code, and its generator polynomial is given by
\begin{equation}\label{Cperp}
	\tilde{h}(x)=h(0)^{-1}x^kh\left(x^{-1}\right).
\end{equation}
where $h(x)=\frac{x^n-1}{g(x)}$ is the parity check polynomial of $C$ \cite{2003Fundamentals}, and $\tilde{h}(x)$ is called the reciprocal polynomial of $h(x)$.

Let $ C'$  be another cyclic code in $\mathbb{F}_r^n$. If $C\subseteq C' $, $C$ is said to be $C'$-containing. Then the generator polynomial of $C'$ can divide any codeword of $C$. If $ C^\perp \subset C $, $C$ is called dual-containing \cite{2019QuantumLi}.

\subsection{Cyclotomic class}
\hspace{1.5em}Let $q=2f+1$ be an odd prime and $g$ be a fixed primitive root of $q$. Define $D_0^{(q)}=\{g^{2s}|s=0,1,...,\frac{q-1}{2}-1\}$ and $D_1^{(q)}=\{g^{2s+1}|s=0,1,...,\frac{q-1}{2}-1\}$ as the quadratic residue set and the quadratic non-residue set respectively.
\begin{definition}
	Let $q$ be an odd prime. Suppose $g$ is a common primitive root modulo both $q$ and $2q$ \cite{2005Elementary}. Then $\mathrm{ord}_{2q}(g)=q-1$, and the cyclotomic classes of order two over $\mathbb{Z}_{2q}$ are
	\begin{equation*}
		\begin{aligned}
			&D_0^{(2q)}=\{g^{2s}  | s=0,1,...,\frac{q-1}{2}-1\},\;D_1^{(2q)}=\{g^{2s+1} | s=0,1,...,\frac{q-1}{2}-1\}, \\
			&E_0^{(2q)}=2D_0^{(q)}=\{2u |u\in D_0^{(q)}\},\quad E_1^{(2q)}=2D_1^{(q)}=\{2u  | u \in D_1^{(q)}\}. \\
		\end{aligned}
	\end{equation*}
\end{definition}

Denote $\mathbb{Z}_q^*=\mathbb{Z}_q\backslash \{0\}$ and $\mathbb{Z}_{2q}^*=\mathbb{Z}_{2q}\backslash \{0\}$. Then we have
\begin{equation*}
	\mathbb{Z}_{q}^*=D_0^{(q)} \cup D_1^{(q)},\; \mathbb{Z}_{2q}^*=\{q\}\cup D_0^{(2q)} \cup D_1^{(2q)} \cup E_0^{(2q)}\cup E_1^{(2q)}.
\end{equation*}
Since all the $q-1$ elements in $E_0^{(2q)}\cup E_1^{(2q)}$ are even, the $q-1$ elements in $D_0^{(2q)} \cup D_1^{(2q)}$ must be odd. Then $-1\in D_0^{(2q)} \cup D_1^{(2q)}$.
\begin{lemma}\label{lemma1}\cite{2005Elementary}
	Let $q$ be an odd prime.
	\begin{equation*}
		\begin{aligned}
			&(a) \quad \mbox{If } q\equiv 1 \bmod 4,\mbox{ then } -1 \in D_0^{(2q)}. \\
			&(b) \quad \mbox{If } q\equiv 3 \bmod 4,\mbox{ then } -1 \in D_1^{(2q)}.
		\end{aligned}
	\end{equation*}
\end{lemma}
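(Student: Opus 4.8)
The plan is to reduce the statement to the classical fact that $-1$ is a square modulo $q$ precisely when $q\equiv 1\bmod 4$, and to track the parity of the discrete logarithm of $-1$ with respect to the fixed primitive root $g$. Since $g$ is a common primitive root of $q$ and $2q$ with $\mathrm{ord}_{2q}(g)=q-1$, the group $\mathbb{Z}_{2q}^{*}=\langle g\rangle$ is cyclic of order $q-1$, $D_0^{(2q)}=\{g^{2s}\}$ is its unique subgroup of index two, and $D_1^{(2q)}$ is the nontrivial coset. As already noted in the excerpt, $-1$ is odd and a unit modulo $2q$, so $-1=g^{t}$ for a unique $t$ with $0\le t\le q-2$; the claim amounts to showing $t$ is even in Case $(a)$ and odd in Case $(b)$.

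First I would identify $t$. Reduction modulo $q$ is a group homomorphism $\mathbb{Z}_{2q}^{*}\to\mathbb{Z}_{q}^{*}$ between two groups of the same order $q-1$ (by the Chinese Remainder Theorem $\mathbb{Z}_{2q}^{*}\cong\mathbb{Z}_{2}^{*}\times\mathbb{Z}_{q}^{*}\cong\mathbb{Z}_{q}^{*}$), hence an isomorphism; it sends $g$ to $g\bmod q$, which by hypothesis is a primitive root modulo $q$, and it sends $-1$ to $-1$. Therefore $g^{t}\equiv-1\pmod{2q}$ if and only if $g^{t}\equiv-1\pmod q$. In the field $\mathbb{F}_q$ the element $g^{(q-1)/2}$ has order $2$, hence equals $-1$ (the only solution of $x^2=1$ other than $1$); since the order of $g$ modulo $2q$ is also $q-1$, the exponent is determined modulo $q-1$, so $t=(q-1)/2$.

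Finally I would read off parities and check ranges. If $q\equiv 1\bmod 4$ then $t=(q-1)/2$ is even, and with $s=(q-1)/4\in\{0,1,\dots,(q-1)/2-1\}$ (note $q\ge 5$ here) we get $-1=g^{2s}\in D_0^{(2q)}$, giving Case $(a)$. If $q\equiv 3\bmod 4$ then $t=(q-1)/2$ is odd, and writing $t=2s+1$ with $s=(q-3)/4\in\{0,1,\dots,(q-1)/2-1\}$ gives $-1=g^{2s+1}\in D_1^{(2q)}$, giving Case $(b)$. Equivalently one can bypass the discrete logarithm via Euler's criterion: $-1\in D_0^{(2q)}$ iff $-1$ is a square in $\mathbb{Z}_{2q}^{*}$ iff (through the isomorphism above) $-1$ is a square in $\mathbb{Z}_{q}^{*}$ iff $(-1)^{(q-1)/2}=1$ in $\mathbb{F}_q$ iff $q\equiv1\bmod4$.

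No step here is a genuine obstacle; the statement is a standard elementary number-theoretic fact. The only points deserving a line of care are the justification that reduction modulo $q$ yields an isomorphism $\mathbb{Z}_{2q}^{*}\cong\mathbb{Z}_{q}^{*}$ (so that quadratic residuosity transfers), and verifying that the resulting index $s$ lands in the stated range $\{0,\dots,(q-1)/2-1\}$ in each case — both routine once the cyclic-group picture is in place. I expect the actual write-up to fold these remarks directly into the single computation $g^{(q-1)/2}\equiv-1$ together with the parity of $(q-1)/2$.
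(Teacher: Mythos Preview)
Your argument is correct and complete: identifying $-1$ with $g^{(q-1)/2}$ via the isomorphism $\mathbb{Z}_{2q}^{*}\cong\mathbb{Z}_{q}^{*}$ and then reading off the parity of $(q-1)/2$ is exactly the right move, and the alternative phrasing through Euler's criterion is an equally clean variant.

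There is nothing to compare against, however: the paper does not prove this lemma at all. It is stated with a citation to an elementary number theory text and used as a black box. So your write-up already goes well beyond what the paper supplies; if anything, you could shorten it to the one-line observation that $-1\equiv g^{(q-1)/2}\pmod{2q}$ together with the parity of $(q-1)/2$, and drop the range checks on $s$, which are cosmetic.
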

\begin{lemma}\label{lemma2}\cite{2016Linear}
	Let $q$ be an odd prime and $i,j \in \{0,1\}$. If $v \in D_i^{(2q)}$, we have
	\begin{equation*}
		(a)\;vD_j^{(2q)}=D_{(i+j)\bmod 2}^{(2q)}, \quad (b)\;vE_{j}^{(2q)}=E_{(i+j)\bmod 2}^{(2q)}.
	\end{equation*}
\end{lemma}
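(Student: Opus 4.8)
The plan is to express all three cyclotomic classes $D_0^{(2q)},D_1^{(2q)}$ (and, through them, $E_0^{(2q)},E_1^{(2q)}$) purely in terms of powers of the common primitive root $g$, and to reduce the claimed set identities to elementary statements about the parity of exponents modulo the \emph{even} number $\mathrm{ord}_{2q}(g)=q-1$.

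For part (a), I would write $v=g^{2k+i}\bmod 2q$ for a suitable $k$, since $v\in D_i^{(2q)}$. Then $vD_j^{(2q)}=\{\,g^{2(k+s)+(i+j)}\bmod 2q : s=0,\dots,\tfrac{q-1}{2}-1\,\}$. Because $\gcd(g,2q)=1$, multiplication by $v$ is a bijection of $\mathbb{Z}_{2q}$, so $|vD_j^{(2q)}|=|D_j^{(2q)}|=\tfrac{q-1}{2}$; hence it suffices to prove the inclusion $vD_j^{(2q)}\subseteq D_{(i+j)\bmod 2}^{(2q)}$. For that I observe that $q-1$ is even, so the parity of an exponent of $g$ is well defined modulo $q-1$; every exponent $2(k+s)+(i+j)$ has parity $(i+j)\bmod 2$, so $g^{2(k+s)+(i+j)}$ lies in $D_0^{(2q)}$ when $i=j$ and in $D_1^{(2q)}$ when $i\neq j$, i.e. in $D_{(i+j)\bmod 2}^{(2q)}$ in all cases (the edge case $i=j=1$ lands correctly in $D_0^{(2q)}$ because $g^{2(k+s)+2}=g^{2(k+s+1)}$). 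Equality of sets then follows from the cardinality count; alternatively one can skip the counting and note directly that, as $s$ runs over a complete residue system modulo $\tfrac{q-1}{2}$, the shifted index runs over the same system, so the listed powers exhaust $D_{(i+j)\bmod 2}^{(2q)}$ without repetition, using $\mathrm{ord}_{2q}(g)=q-1$ to rule out collisions.

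For part (b), I would exploit the identity $E_j^{(2q)}=2D_j^{(q)}$ together with the fact that $g$ is also a primitive root modulo $q$, so reduction modulo $q$ carries $D_i^{(2q)}$ into $D_i^{(q)}$. In $\mathbb{Z}_{2q}$ one has $v\cdot(2d)\equiv 2(vd)\pmod{2q}$, and $2a\equiv 2b\pmod{2q}$ iff $a\equiv b\pmod q$; hence multiplication by $v$ on $E_j^{(2q)}$ is, after stripping the factor $2$, just multiplication by $\bar v:=v\bmod q\in D_i^{(q)}$ on $D_j^{(q)}$. Applying the mod-$q$ analogue of part (a) (equivalently, the classical product rule for quadratic residues and non-residues) gives $\bar v D_j^{(q)}=D_{(i+j)\bmod 2}^{(q)}$, whence $vE_j^{(2q)}=2D_{(i+j)\bmod 2}^{(q)}=E_{(i+j)\bmod 2}^{(2q)}$.

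The routine arithmetic is harmless; the one place that needs care — and the main obstacle — is the bookkeeping between the three moduli in play: congruences of \emph{elements} modulo $2q$, congruences of \emph{exponents of $g$} modulo $q-1$, and congruences modulo $q$ that govern the even classes $E_j^{(2q)}$. In particular one must use that $q-1$ is even (so that "parity of the exponent" is meaningful modulo $\mathrm{ord}_{2q}(g)$) and that shifting an index by a constant is a bijection of the relevant residue system, and one should separately verify the small case $i=j=1$ where $(i+j)\bmod 2=0$.
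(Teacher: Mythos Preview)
The paper does not supply its own proof of this lemma; it is quoted from an external reference without argument, so there is no in-paper proof to compare against. Your proposal is a correct self-contained argument: reducing part~(a) to the parity of exponents of the common primitive root $g$ modulo the even integer $\mathrm{ord}_{2q}(g)=q-1$ is exactly right, and the cardinality (or index-shift) step legitimately upgrades the inclusion to an equality. For part~(b), your use of the equivalence $2a\equiv 2b\pmod{2q}\iff a\equiv b\pmod q$ to strip the factor $2$ and reduce to the classical mod-$q$ product rule for quadratic residues and non-residues is the standard and correct route; the only bookkeeping point worth making explicit in a write-up is that reduction modulo $q$ sends $D_i^{(2q)}$ onto $D_i^{(q)}$ because $g$ is simultaneously a primitive root modulo $q$ and modulo $2q$, which you already note.
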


\subsection{Cyclotomy coset}
\hspace{1.5em}Let $r\in D_0^{(2q)}$ be a prime, and $\mathrm{gcd}(r,2q)=1$. Then the $r$-cyclotomy coset modulo $2q$ containing the integer $t$ is defined as
\begin{equation}\label{e9}
	C_{(t,2q)}=\{tr^i \bmod 2q|i \in \mathbb{N}\},
\end{equation}
where $\mathbb{N}$ is the set of all nonnegative integers.

\begin{lemma}\label{lemma3}
	Let $r=g^{2k} $ be an odd prime, and $\mathrm{ord}_{2q}(r)=\ell_{k}$. If $\ell_{m} \leq \ell_{k}$ is an integer that satisfies $tr^{\ell_{m}} \equiv t \bmod 2q$, then the size of $C_{(t,2q)}$ is given by
	\begin{equation*}
		|C_{(t,2q)}|=
		\begin{cases}
			\ell_{m} =1, & \mbox{ if }\mathrm{gcd}(2q,t) = q, \\
			\ell_{m} =\ell_{k}, & \mbox{ otherswise}.
		\end{cases}
	\end{equation*}
\end{lemma}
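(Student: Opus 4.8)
The plan is to argue directly from the definition in \eqref{e9}: the quantity $|C_{(t,2q)}|$ is the least positive integer $\ell$ with $tr^{\ell}\equiv t\bmod 2q$, which is the same as the least $\ell$ with $2q\mid t(r^{\ell}-1)$. Since $q$ is an odd prime, the only divisors of $2q$ are $1,2,q,2q$, so I would split the verification according to the value of $\gcd(2q,t)$; for $t\in\mathbb{Z}_{2q}^*$ (the range that makes the statement correct) this gcd is $1$, $2$, or $q$, and these three cases match the two lines of the claimed formula.

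First consider $\gcd(2q,t)=q$. Then $t$ is an odd multiple of $q$, and the only such residue in $\mathbb{Z}_{2q}$ is $t\equiv q$. Because $r\in D_0^{(2q)}$ is odd, $tr=qr$ is again an odd multiple of $q$, hence $qr\equiv q\bmod 2q$, i.e. $tr\equiv t\bmod 2q$. So $\ell=1$ already works and $|C_{(t,2q)}|=1=\ell_m$, giving the first line. Next consider $\gcd(2q,t)=1$, so $t$ is a unit modulo $2q$: then $2q\mid t(r^{\ell}-1)$ iff $2q\mid r^{\ell}-1$ iff $r^{\ell}\equiv 1\bmod 2q$, whose least solution is $\mathrm{ord}_{2q}(r)=\ell_k$. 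Finally consider $\gcd(2q,t)=2$; write $t=2t'$ with $\gcd(t',q)=1$. Then $2q\mid 2t'(r^{\ell}-1)$ is equivalent to $q\mid t'(r^{\ell}-1)$, and since $t'$ is invertible modulo $q$ this is equivalent to $r^{\ell}\equiv 1\bmod q$, whose least solution is $\mathrm{ord}_q(r)$.

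The one substantive point is therefore the identity $\mathrm{ord}_q(r)=\ell_k=\mathrm{ord}_{2q}(r)$. I would obtain it from the standing hypothesis that $g$ is a common primitive root of $q$ and $2q$ with $\mathrm{ord}_{2q}(g)=q-1$: by the Chinese Remainder Theorem $\mathbb{Z}_{2q}^*\cong\mathbb{Z}_2^*\times\mathbb{Z}_q^*\cong\mathbb{Z}_q^*$, and under this isomorphism reduction modulo $q$ is a group isomorphism, so $\mathrm{ord}_{2q}(u)=\mathrm{ord}_q(u\bmod q)$ for every unit $u$ modulo $2q$; applying this to $u=r=g^{2k}$ yields $\mathrm{ord}_q(r)=\ell_k$. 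Alternatively, one can argue purely by exponents: $\mathrm{ord}_{2q}(g)=q-1$ gives $\ell_k=\frac{q-1}{\gcd(q-1,2k)}$, and since $\mathrm{ord}_q(g)=q-1$ as well, $\mathrm{ord}_q(g^{2k})$ is the same fraction.

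Assembling the three cases gives exactly the stated dichotomy: size $1$ when $\gcd(2q,t)=q$ (that is, $t\equiv q$), and size $\ell_k$ in all remaining cases. I do not anticipate a serious obstacle here; the only care needed is to keep $t$ inside $\mathbb{Z}_{2q}^*$ so that $\gcd(2q,t)\neq 2q$, and to use the oddness of $r$ in the $t\equiv q$ case and the common–primitive–root hypothesis in the $\gcd(2q,t)=2$ case.
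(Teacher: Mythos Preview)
Your proposal is correct and follows essentially the same approach as the paper: a three-way case split on $\gcd(2q,t)\in\{1,2,q\}$, reducing the congruence $tr^{\ell}\equiv t\bmod 2q$ to a divisibility condition on $r^{\ell}-1$ in each case. In fact your treatment of the case $\gcd(2q,t)=2$ is more complete than the paper's, which asserts $\ell_m=\ell_k$ from $q\mid r^{\ell_m}-1$ without explicitly establishing the identity $\mathrm{ord}_q(r)=\mathrm{ord}_{2q}(r)$ that you derive from the common--primitive--root hypothesis.
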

\begin{proof}
	Since $t<2q$ is an integer, there are three potential values of $\mathrm{gcd}(2q,t)$. First, let $\mathrm{gcd}(2q,t)=1$. The equation $tr^{\ell_{m}} \equiv t \bmod 2q$ means $r^{\ell_{m}} \equiv 1\bmod 2q$. Then we have $\ell_{m}=\ell_{k}$. Second, let $\mathrm{gcd}(2q,t)=2$. By $tr^{\ell_{m}}\equiv t \bmod 2q$, then $q|\frac{t}{2}(r^{\ell_{m}}-1)$. Since $q$ is an odd prime, we have $\mathrm{gcd}(q,\frac{t}{2})=1$ and $q|(r^{\ell_{m}}-1)$. Note that $r$ is an odd prime. Then $\ell_{m}=\ell_{k}$. Finally, let $\mathrm{gcd}(2q,t)=q$. Then $2|\frac{t}{q}(r^{\ell_{m}}-1)$. For $C_{(t,2q)}=\{t,tr,\cdots,tr^{\ell_m-1}\}$, it is clear that $tr = t \bmod 2q$. Then $\ell_m = 1$. We have finish this proof.
\end{proof}

From now on, let $\eta$ be a $2q$-$th$ primitive root of unity over some extension field of $\mathbb{F}_{r}$. Then, it is known that the polynomials
\begin{equation}\label{e6}
	\begin{aligned}
		&g_{D_i}^{(2q)}(x) = \prod_{j\in D_{i}^{(2q)}}(x-\eta^{j})\in \mathbb{F}_r[x],\\
		&g_{E_i}^{(2q)}(x) = \prod_{j\in E_{i}^{(2q)}}(x-\eta^{j})\in \mathbb{F}_r[x],
	\end{aligned}
\end{equation}
for any $i\in \{0,1\}$.

\section{Cyclic codes from generalized cyclotomy of order two}
\label{section2}
\hspace{1,5em}In this section, we first obtain some dual-containing cyclic codes by the cyclotomic classes of order two over $\mathbb{Z}_{2q}$ and discuss their Hamming distances. Furthermore, we introduce the augmented cyclic codes of the dual-containing codes obtained.

\subsection{Dual containing codes}
\hspace{1.5em}Since $\eta^{q}=-1$. Thus, we can get the factorization of $x^{2q}-1$ as
\begin{equation}\label{e10}
	x^{2q}-1=\prod_{j=0}^{2q-1}(x-\eta^j)=(x+1)(x-1)\prod_{i=0}^{1}g_{D_i}^{(2q)}(x)g_{E_i}^{(2q)}(x).
\end{equation}

Let $C_{D_i}^{(2q)}$, $C_{E_i}^{(2q)}$ and $C_{{D_iE_j}}^{(2q)}$ be the cyclic codes generated by $g_{D_i}^{(2q)}(x)$, $g_{E_i}^{(2q)}$ and $g_{D_i}^{(2q)}(x)g_{E_j}^{(2q)}(x)$ respectively, where $i,j\in \{0,1\}$. Then we have the following properties.
\begin{lemma}\label{lemma6}
	Let $q=4m+3$ be an odd prime. By using the notation above, we have
	\begin{equation*}
		(a)\;{C_{D_i}^{(2q)}}^{\perp}\subset C_{D_i}^{(2q)},\quad (b)\;{C_{E_i}^{(2q)}}^{\perp}\subset C_{E_i}^{(2q)},\quad (c)\;{C_{{D_iE_j}}^{(2q)}}^{\perp}\subset C_{{D_iE_j}}^{(2q)}.
	\end{equation*}
	
\end{lemma}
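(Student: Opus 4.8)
The plan is to translate each assertion into the language of zero sets (defining sets) of cyclic codes and reduce it to a disjointness statement about the cyclotomic classes, which will then be immediate from Lemmas~\ref{lemma1} and~\ref{lemma2}.

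First I would record the criterion I intend to use. If $C=\langle g(x)\rangle$ is a cyclic code of length $2q$ whose generator polynomial has zero set $Z\subseteq\mathbb{Z}_{2q}$, i.e. $g(x)=\prod_{j\in Z}(x-\eta^{j})$, then the parity check polynomial $h(x)=(x^{2q}-1)/g(x)$ has zero set $\mathbb{Z}_{2q}\setminus Z$, and by formula~(\ref{Cperp}) the generator polynomial $\tilde h(x)$ of $C^{\perp}$ has zero set $-(\mathbb{Z}_{2q}\setminus Z)=\{-j\bmod 2q:j\notin Z\}$. Since a cyclic code with the larger zero set is the smaller code, $C^{\perp}\subseteq C$ holds iff $Z\subseteq -(\mathbb{Z}_{2q}\setminus Z)$, equivalently iff $Z\cap(-Z)=\emptyset$, where $-Z=\{-j\bmod 2q:j\in Z\}$. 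Note that the only elements of $\mathbb{Z}_{2q}$ fixed by negation are $0$ and $q$, and none of $D_i^{(2q)},E_i^{(2q)}$ contains them, so this disjointness is at least not a priori impossible.

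Next, since $q=4m+3\equiv 3\bmod 4$, Lemma~\ref{lemma1}(b) gives $-1\in D_1^{(2q)}$, and applying Lemma~\ref{lemma2} with $v=-1$ yields
\begin{equation*}
-D_0^{(2q)}=D_1^{(2q)},\qquad -D_1^{(2q)}=D_0^{(2q)},\qquad -E_0^{(2q)}=E_1^{(2q)},\qquad -E_1^{(2q)}=E_0^{(2q)}.
\end{equation*}
For (a) the zero set is $Z=D_i^{(2q)}$, so $-Z=D_{(i+1)\bmod 2}^{(2q)}$ and $Z\cap(-Z)=\emptyset$ because $D_0^{(2q)}$ and $D_1^{(2q)}$ are disjoint; part (b) is identical with $E$ in place of $D$. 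For (c) the zero set is $Z=D_i^{(2q)}\cup E_j^{(2q)}$ and $-Z=D_{(i+1)\bmod 2}^{(2q)}\cup E_{(j+1)\bmod 2}^{(2q)}$; the four pairwise intersections all vanish: the $D$-with-$D$ and $E$-with-$E$ ones by the displayed identities, and the mixed $D$-with-$E$ ones because every element of $D_i^{(2q)}$ is odd while every element of $E_j^{(2q)}$ is even. This establishes $C^{\perp}\subseteq C$ in all three cases.

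Finally I would upgrade $\subseteq$ to the strict inclusion $\subset$ by a dimension count: in cases (a) and (b) the generator has degree $(q-1)/2$, so $\dim C=(3q+1)/2>(q-1)/2=\dim C^{\perp}$, and in case (c) the generator has degree $q-1$, so $\dim C=q+1>q-1=\dim C^{\perp}$. I do not expect a serious obstacle; the only points needing care are getting the direction of the defining-set inclusion right when unwinding $C^{\perp}\subseteq C$, and checking via Lemma~\ref{lemma2} that negation \emph{swaps} the two classes of each type rather than fixing them — and this last point is precisely where the hypothesis $q\equiv 3\bmod 4$ is used.
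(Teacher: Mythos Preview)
Your proof is correct and follows essentially the same approach as the paper: both arguments hinge on Lemmas~\ref{lemma1} and~\ref{lemma2} to show that, for $q\equiv 3\bmod 4$, negation swaps the classes ($-D_i^{(2q)}=D_{i+1}^{(2q)}$ and $-E_j^{(2q)}=E_{j+1}^{(2q)}$), which forces the generator polynomial to divide the generator of the dual. The only difference is packaging---you invoke the defining-set criterion $Z\cap(-Z)=\emptyset$, while the paper writes out the reciprocal $\tilde h(x)$ explicitly and checks $g_{D_i}^{(2q)}(x)\mid \tilde h(x)$---but the content is identical.
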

\begin{proof}
	By (\ref{Cperp}), the reciprocal polynomial of $g_{D_i}^{(2q)}(x)$ is
	\begin{equation*}
		\tilde{g}_{D_i}^{(2q)}(x)=\left(g_{D_i}^{(2q)}(0)\right)^{-1}x^{\mathrm{deg}\left(g_{D_i}^{(2q)}(x)\right)}g_{D_i}^{(2q)}\left(x^{-1}\right),
	\end{equation*}
	Namely,
	\begin{equation*}
		g_{D_i}^{(2q)}(x)=(x-\eta^{u_{i_1}})(x-\eta^{u_{i_2}})\cdots(x-\eta^{u_{i_{\frac{e}{2}}}}),
	\end{equation*}
	where $D_i^{(2q)}=\{u_{i_\delta}|1\leq \delta \leq \frac{e}{2}\}$. Then
	\begin{equation*}
		\begin{split}
			\tilde{g}_{D_i}^{(2q)}(x)=&(-\eta^{u_{i_1}})^{-1}(-\eta^{u_{i_2}})^{-1}\cdots(-\eta^{u_{i_{\frac{e}{2}}}})^{-1}x^{\frac{e}{2}} \\
			&(x^{-1}-\eta^{u_{i_1}})(x^{-1}-\eta^{u_{i_2}})\cdots (x^{-1}-\eta^{u_{i_{\frac{e}{2}}}}) \\
			=&(x-\eta^{-u_{i_1}})(x-\eta^{-u_{i_2}})\cdots (x-\eta^{-u_{i_{\frac{e}{2}}}}).
		\end{split}
	\end{equation*}
	According to Lemmas \ref{lemma1} and \ref{lemma2}, it was already seen that $-D_i^{(2q)}=D_{(i+1)}^{(2q)}$, then
	\begin{equation}\label{e7}
		\tilde{g}_{D_i}^{(2q)}(x)=g_{D_{(i+1)}}^{(2q)}(x),
	\end{equation}
	for any $i=\{0,1\}$.
	
	Moreover, since $-E_i^{(2q)}=E_{(i+1)}^{(2q)}$, it is obvious that $\left(-D_i^{(2q)}\right) \cup \left(-E_j^{(2q)}\right)=D_{(i+1)}^{(2q)} \cup E_{(j+1)}^{(2q)}$. Then
	\begin{equation}\label{e8}
		\tilde{g}_{E_i}^{(2q)}(x)=g_{E_{(i+1)}}^{(2q)},\quad \tilde{g}_{D_i}^{(2q)}(x)\tilde{g}_{E_j}^{(2q)}(x)=g_{D_{(i+1)}}^{(2q)}(x)g_{E_{(j+1)
		}}^{(2q)}(x),
	\end{equation}
	Thus the parity check polynomial of $C_{D_i}^{(2q)}$ is
	\begin{equation*}
		h(x) = \frac{x^{2q}-1}{g_{D_i}^{(2q)}(x)}=(x-1)(x+1)g_{D_{(i+1)}}^{(2q)}(x)\prod _{i=0}^{1}g_{E_i}^{(2q)}(x).
	\end{equation*}
	And by $(\ref{Cperp})$ and $(\ref{e7})$,
	\begin{equation*}
		\tilde{h}(x)=(x-1)(x+1)g_{D_i}^{(2q)}(x)\prod _{i=0}^{1}g_{E_i}^{(2q)}(x).
	\end{equation*}
	Then ${C_{D_i}^{(2q)}}^{\perp}=\langle\tilde{h}(x)\rangle$. Since $g_{D_i}^{(2q)}(x) | \tilde{h}(x)$, we obtain ${C_{D_i}^{(2q)}}^{\perp} \subset C_{D_i}^{(2q)}$. The other conclusions can be obtained trivially.
\end{proof}

\begin{example}
Let $q=4f+3$ be an odd prime, and $r\in D_0^{(2q)}$. Table \ref{biaolall} gives some examples of cyclic codes and their dual codes. All computations have been done by MAGMA \cite{Magma}.
\begin{table}[h]
	\centering
	\caption{Dual-containing cyclic codes from cyclotomic classes}  %
	\label{biaolall}
	\begin{tabular}{lll}
		\hline\noalign{\smallskip}
		Codes   &  Duals   &  Optimal or almost optimal \cite{Grassl:codetables} \\
		\noalign{\smallskip}\hline\noalign{\smallskip}
		$ C_{D_{0}}^{(22)}=[22,17,2]_3 $&$ {C_{D_{0}}^{(22)}}^{\perp}=[22,5,12]_3 $& $C_{D_{0}}^{(22)}$almost optimal, ${C_{D_{0}}^{(22)}}^{\perp}$optimal \cite{Grassl:codetables}  \\
		$ C_{E_{0}}^{(22)}=[22,17,2]_3 $&$ {C_{E_{0}}^{(22)}}^{\perp}=[22,5,12]_3 $& $C_{E_{0}}^{(22)}$almost optimal, ${C_{E_{0}}^{(22)}}^{\perp}$optimal \cite{Grassl:codetables}  \\
		$C_{D_0E_0}^{(22)}=[22,12,7]_3$ & ${C_{D_0E_0}^{(22)}}^{\perp}=[22,10,9]_3$ & both optimal \cite{Grassl:codetables} \\
		\noalign{\smallskip}\hline
	\end{tabular}
\end{table}
\end{example}

\subsection{Augmented cyclic codes}
\hspace{1.5em}A cyclic code $C'=\langle g'(x) \rangle$ is called augmented cyclic code of $C=\langle g(x) \rangle$ if $C \subset C'$. It follows that $g'(x)|g(x)$. By (\ref{e9}), the unique irreducible minimal polynomial of $\eta^t$ over $\mathbb{F}_{r}[x]$ is $M_t(x)=\prod_{j\in  C_{(t,2q)}}(x-\eta^j)$.

\begin{lemma}\label{theorem1}\cite{2003FiniteField}
	Consider the cyclic code $C_{D_i}^{(2q)}=\langle g_{D_i}^{(2q)}(x) \rangle$. If the size of $C_{(1,2q)}$ is $\ell_{k}$, then the generator polynomial $g_{D_i}^{(2q)}(x)$ can be expressed as the product of $\theta = \frac{q-1}{2\ell_{k}}$ irreducible polynomials of degree $\ell_{k}$ over $\mathbb{F}_{r}[x]$ as follows
	\begin{equation}\label{e11}
		g_{D_i}^{(2q)}(x)=\prod_{j=1}^{\theta} M_{d_{i_j}}(x),
	\end{equation}
	where $d_{i_j}$ are some numbers such that $D_i^{(2q)} = \cup_{j=1}^{\theta} C_{(d_{i_j},2q)}$, and $\mathrm{deg}(M_{d_{i_j}})=\ell_{k}$, for all $d_{i_j}$.
\end{lemma}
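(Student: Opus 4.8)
The plan is to unfold the definitions of the cyclotomy cosets and the generator polynomial, and then combine two facts established earlier: the formula $g_{D_i}^{(2q)}(x) = \prod_{j\in D_i^{(2q)}}(x-\eta^j)$ from (\ref{e6}), and the description of each $r$-cyclotomy coset modulo $2q$ in (\ref{e9}). First I would observe that, by definition, $D_i^{(2q)}$ is a union of $r$-cyclotomy cosets: since $r = g^{2k}\in D_0^{(2q)}$, Lemma \ref{lemma2}(a) with $i=j=0$ gives $rD_i^{(2q)} = D_i^{(2q)}$, so $D_i^{(2q)}$ is closed under multiplication by $r$ modulo $2q$, hence it decomposes as a disjoint union of cosets $C_{(d_{i_j},2q)}$ for suitable representatives $d_{i_j}$. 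Correspondingly, $g_{D_i}^{(2q)}(x)$ factors as the product of the minimal polynomials $M_{d_{i_j}}(x) = \prod_{j\in C_{(d_{i_j},2q)}}(x-\eta^j)$, which are exactly the distinct irreducible factors of $g_{D_i}^{(2q)}(x)$ over $\mathbb{F}_r[x]$.

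Next I would pin down the size of each coset. Every element $t$ appearing in $D_i^{(2q)}$ is odd (as noted right after the cyclotomic class definition, the elements of $D_0^{(2q)}\cup D_1^{(2q)}$ are the odd units), and in particular $\gcd(2q,t)\in\{1,q\}$; but $t=q$ is impossible since $q$ is even-free among the $D$-classes — more precisely $q\notin D_0^{(2q)}\cup D_1^{(2q)}$ by the decomposition $\mathbb{Z}_{2q}^* = \{q\}\cup D_0^{(2q)}\cup D_1^{(2q)}\cup E_0^{(2q)}\cup E_1^{(2q)}$. Hence $\gcd(2q,t)=1$ for every $t\in D_i^{(2q)}$, and Lemma \ref{lemma3} then forces $|C_{(t,2q)}| = \ell_k$, where $\ell_k = |r| = \mathrm{ord}_{2q}(r)$ is the hypothesized value $|C_{(1,2q)}|$. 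Since all cosets inside $D_i^{(2q)}$ have the same cardinality $\ell_k$, and $|D_i^{(2q)}| = \frac{q-1}{2}$, a counting argument gives that the number $\theta$ of cosets in the decomposition is $\theta = \frac{q-1}{2\ell_k}$, and correspondingly $\deg M_{d_{i_j}}(x) = |C_{(d_{i_j},2q)}| = \ell_k$ for all $j$. This yields exactly the factorization (\ref{e11}).

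The main point requiring care — though it is more bookkeeping than a genuine obstacle — is the irreducibility of each $M_{d_{i_j}}(x)$ over $\mathbb{F}_r[x]$ and the fact that they are pairwise distinct; this is the standard theory of minimal polynomials over finite fields (each $M_t(x)$ is the minimal polynomial of $\eta^t$, whose degree equals the size of the $r$-cyclotomy coset of $t$), which is precisely what is invoked by the citation \cite{2003FiniteField}. I would also double-check that the representatives $d_{i_j}$ can be chosen so the union $D_i^{(2q)} = \bigcup_{j=1}^\theta C_{(d_{i_j},2q)}$ is honestly disjoint — this follows because two $r$-cyclotomy cosets modulo $2q$ are either equal or disjoint. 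Assembling these pieces, $g_{D_i}^{(2q)}(x) = \prod_{j\in D_i^{(2q)}}(x-\eta^j) = \prod_{j=1}^\theta \prod_{t\in C_{(d_{i_j},2q)}}(x-\eta^t) = \prod_{j=1}^\theta M_{d_{i_j}}(x)$, completing the proof.
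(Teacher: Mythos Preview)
Your proposal is correct and follows essentially the same approach as the paper: use Lemma~\ref{lemma3} to show every relevant $r$-cyclotomy coset has size $\ell_k$, observe (via $r\in D_0^{(2q)}$) that $D_i^{(2q)}$ is a union of such cosets, and count to obtain $\theta=\frac{q-1}{2\ell_k}$. Your write-up is in fact more careful than the paper's --- you explicitly invoke Lemma~\ref{lemma2} to justify the closure $rD_i^{(2q)}=D_i^{(2q)}$ and explicitly rule out $\gcd(2q,t)=q$, whereas the paper simply asserts these points; the only slip is the phrase ``with $i=j=0$'', which should read ``with $v=r\in D_0^{(2q)}$ and arbitrary $j$'' so that the conclusion covers both $D_0^{(2q)}$ and $D_1^{(2q)}$.
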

\begin{proof}
	Since $q$ is an odd prime, by Lemma \ref{lemma3}, it is clear that $|C_{(t,2q)}|=\ell_{k}$ for any integer $t\in \{1,2,...,q-1,q+1,...,2q-1\}$. Since $r \in D_0^{(2q)}$, it is known that $D_i^{(2q)}=\cup_{j=1}^{\theta}C_{(d_{i_j},2q)}$, where $d_{i_1},d_{i_2},...,d_{i_{\theta}}\in \{1,2,...,q-1,q+1,...,2q-1\}$ are some integers. Moreover, we have $\theta = \frac{|D_i^{(2q)}|}{|C_{(t,2q)}|}=\frac{q-1}{2\ell_{k}}$.
\end{proof}
\begin{example}
	Consider the cyclotomic classes of order two modulo $2q$, where $q=13$.
	\begin{equation*}
		\begin{aligned}
			&D_0^{(26)}=\{1,23,9,25,3,17\},\quad\;\, D_1^{(26)}=\{7,5,11,19,21,15\}, \\
			&E_0^{(26)}=\{2,20,18,24,6,8\},\quad E_1^{(26)}=\{14,10,22,12,16,4\}, \quad \{13\}. \\
		\end{aligned}
	\end{equation*}
	
	Let $r_1=3\in D_0^{(2q)}$. Then $\mathrm{ord}_{26}(r_1)=3$ and by $(\ref{e9})$, the $r_1-$ cyclotomy cosets modulo $26$ are
	\begin{equation*}
		\begin{aligned}
			&C_{(1,26)}=\{1,3,9\},\quad\quad C_{(2,26)}=\{2,6,18\}, \quad \;\; C_{(4,26)}=\{4,12,10\}, \\
			&C_{(5,26)}=\{5,15,19\},\quad C_{(7,26)}=\{7,21,11\},\quad C_{(8,26)}=\{8,24,20\} \\
			&C_{(13,26)}=\{13\},\qquad\quad C_{(14,26)}=\{14,16,22\}, C_{(17,26)}=\{17,25,23\}.
		\end{aligned}
	\end{equation*}
	Thus $D_0^{(26)}=C_{(1,26)}\cup C_{(17,26)}$, which is equivalent to $g_{D_0}^{(26)}=M_1(x)M_{17}(x)$. And in this way, the following equations can be drawn:
	\begin{equation*}
		D_1^{(26)}\!=\!C_{(7,26)}\!\cup\! C_{(5,26)},E_0^{(26)}\!=\!C_{(2,26)}\!\cup\! C_{(8,26)},E_1^{(26)}\!=\!C_{(14,26)}\!\cup\! C_{(4,26)}.
	\end{equation*}
	
	It is easy to deduce that the augmented cyclic code of $C_{D_{0}}^{(26)}$ is $C_{D_0}'=\langle M_{t}(x)\rangle$, where $t\in \{1,17\}$. Similar for other situations. Furthermore, since the augmented code obtained in this way is also dual-containing, we have the following lemma.
\end{example}

\begin{lemma}\label{lemma7}
	Let $q=4m+3$ be an odd prime, where $m\in \mathbb{N}$. Let $C_{D_i}^{(2q)}$, $C_{E_i}^{(2q)}$ and $C_{D_iE_j}^{(2q)}$ be the dual-containing codes defined in Lemma $\ref{lemma6}$. Then the following conclusions hold.
	
	$(a)$ If $C_{D_i}'=\langle \frac{g_{D_i}^{(2q)}(x)}{\prod_{i\in A}M_i(x)} \rangle$, then $C_{D_i}^{(2q)} \subset C_{D_i}'$, where $A$ is some proper subset of $\{d_{i_1},d_{i_2},...,d_{i_\theta}\}$.
	
	$(b)$ If $C_{E_i}'=\langle \frac{g_{E_i}^{(2q)}(x)}{\prod_{i\in B}M_i(x)} \rangle$, then $C_{E_i}^{(2q)} \subset C_{E_i}'$, where $B$ is some proper subset of $\{e_{i_1},e_{i_2},...,e_{i_\theta}\}$.
	
	$(c)$ If $C_{D_iE_j}'=\langle \frac{g_{D_i}^{(2q)}(x)g_{E_j}^{(2q)}(x)}{\prod_{i\in S} M_i(x)}\rangle$, then $C_{D_iE_j}^{(2q)} \subset C_{D_iE_j}'$, where $S$ is some proper subset of $\{d_{i_1},d_{i_2},...,d_{i_\theta} \}\cup \{e_{j_1},e_{j_2},...,e_{j_\theta} \}$.
\end{lemma}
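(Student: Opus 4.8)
The plan is to reduce each of $(a)$, $(b)$, $(c)$ to the elementary fact that, inside $\mathbb{F}_r[x]/(x^{2q}-1)$, one has $\langle g(x)\rangle\subseteq\langle g'(x)\rangle$ whenever $g'(x)\mid g(x)$, together with the factorization supplied by Lemma \ref{theorem1}. First I would invoke Lemma \ref{theorem1} to write $g_{D_i}^{(2q)}(x)=\prod_{j=1}^{\theta}M_{d_{i_j}}(x)$ as a product of distinct irreducible minimal polynomials, each of degree $\ell_{k}$, over $\mathbb{F}_r[x]$. For any proper subset $A$ of $\{d_{i_1},\dots,d_{i_\theta}\}$ the quotient
\[
g_{D_i}'(x)=\frac{g_{D_i}^{(2q)}(x)}{\prod_{t\in A}M_t(x)}=\prod_{d_{i_j}\notin A}M_{d_{i_j}}(x)
\]
is then a genuine polynomial dividing $g_{D_i}^{(2q)}(x)$, hence dividing $x^{2q}-1$ by (\ref{e10}); so $C_{D_i}'=\langle g_{D_i}'(x)\rangle$ is a well-defined cyclic code, and the divisibility $g_{D_i}'(x)\mid g_{D_i}^{(2q)}(x)$ yields $C_{D_i}^{(2q)}=\langle g_{D_i}^{(2q)}(x)\rangle\subseteq\langle g_{D_i}'(x)\rangle=C_{D_i}'$, with strict inclusion exactly when $A\neq\emptyset$. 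Parts $(b)$ and $(c)$ go through verbatim with $g_{E_i}^{(2q)}(x)$, respectively $g_{D_i}^{(2q)}(x)g_{E_j}^{(2q)}(x)$, in place of $g_{D_i}^{(2q)}(x)$, using Lemma \ref{theorem1} for the factorization of $g_{E_i}^{(2q)}(x)$ into minimal polynomials and the corresponding factorization of the product.

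The substantive point — the one flagged in the paragraph preceding the statement — is that each augmented code $C_{D_i}'$, $C_{E_i}'$, $C_{D_iE_j}'$ is again dual-containing, so that it is eligible to serve as the outer code $C_2$ in the scheme of Lemma \ref{theorem2}. I would establish this at the level of defining sets. The defining set of $C_{D_i}'$, i.e.\ the set of exponents $s$ for which $\eta^{s}$ is a root of $g_{D_i}'(x)$, is $T'=\bigcup_{d_{i_j}\notin A}C_{(d_{i_j},2q)}\subseteq D_i^{(2q)}$, and, repeating the reciprocal-polynomial bookkeeping of Lemma \ref{lemma6}, $C_{D_i}'$ is dual-containing precisely when $T'\cap(-T')=\emptyset$. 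Since $q=4m+3\equiv 3\bmod 4$, Lemma \ref{lemma1}(b) gives $-1\in D_1^{(2q)}$, so Lemma \ref{lemma2}(a) yields $-D_i^{(2q)}=D_{(i+1)\bmod 2}^{(2q)}$; as $D_0^{(2q)}$ and $D_1^{(2q)}$ are disjoint and $T'\subseteq D_i^{(2q)}$, we obtain $-T'\subseteq D_{(i+1)\bmod 2}^{(2q)}$, whence $T'\cap(-T')=\emptyset$. For $C_{E_i}'$ one uses instead $-E_i^{(2q)}=E_{(i+1)\bmod 2}^{(2q)}$ from Lemma \ref{lemma2}(b); for $C_{D_iE_j}'$ one uses that the four classes $D_0^{(2q)},D_1^{(2q)},E_0^{(2q)},E_1^{(2q)}$ are pairwise disjoint, so that a defining set contained in $D_i^{(2q)}\cup E_j^{(2q)}$ has its negative contained in the disjoint set $D_{(i+1)\bmod 2}^{(2q)}\cup E_{(j+1)\bmod 2}^{(2q)}$.

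The only delicate step, and the one I would treat with care, is exactly this disjointness of a defining set from its negative: it is here that the hypothesis $q\equiv 3\bmod 4$ is indispensable. If $q\equiv 1\bmod 4$, then $-1\in D_0^{(2q)}$ by Lemma \ref{lemma1}(a), so $-D_i^{(2q)}=D_i^{(2q)}$, a defining set and its negative need not be disjoint, and the dual-containing property of the augmented code can fail. I would also spell out that the inclusions in $(a)$--$(c)$ are non-strict in general and become strict precisely when the removed index set ($A$, $B$, or $S$) is nonempty, so as to exclude the degenerate choice (e.g.\ $A=\emptyset$) for which $C_{D_i}'=C_{D_i}^{(2q)}$.
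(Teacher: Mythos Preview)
Your proof of the inclusions $(a)$--$(c)$ is correct and matches the paper's approach: the paper's entire proof is the single sentence ``The results can be obtained easily from the definition of dual-containing code,'' which amounts to the divisibility observation $g'(x)\mid g(x)\Rightarrow\langle g(x)\rangle\subseteq\langle g'(x)\rangle$ that you spell out explicitly.

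Where you go further than the paper is in your second and third paragraphs, proving that the augmented codes are themselves dual-containing. The paper asserts this in the sentence immediately preceding Lemma~\ref{lemma7} but never justifies it, yet it is precisely the property required in Theorems~\ref{theorem4} and~\ref{theorem5} when the augmented code plays the role of $C_2$ in Lemma~\ref{theorem2}. Your defining-set argument (showing $T'\cap(-T')=\emptyset$ via $-1\in D_1^{(2q)}$ from Lemma~\ref{lemma1} and the multiplication rules of Lemma~\ref{lemma2}) is the clean abstraction of the reciprocal-polynomial computation carried out for the full codes in Lemma~\ref{lemma6}, and your remark that this is exactly where the hypothesis $q\equiv 3\bmod 4$ enters is well placed. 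So your write-up is not merely correct but fills a genuine gap in the paper's exposition.
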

\begin{proof}
	The results can be obtained from the definition of dual-containing codes.
\end{proof}

\section{Quantum synchronizable codes from the chain cyclic codes}
\hspace{1,5em}We first give some general facts of quantum synchronizable codes. A $(c_l,c_r)$-$[[n,k]]$ quantum synchronizable code is a quantum error-correcting code that corrects not only Pauli errors, but also block misalignment to the left by $c_l$ qubits and to the right by $c_r$ qubits for some nonnegative integers $c_l$ and $c_r$. Then the general construction framework of $r$-ary quantum synchronizable codes is given as following.
\begin{lemma}\cite{2018Non}\label{theorem2}
	Let $C_1=\langle g_1(x)\rangle$ and $C_2=\langle g_2(x)\rangle$ be two cyclic codes of parameters $[n,k_1,d_1]_r$ and $[n,k_2,d_2]_r$ respectively over $\mathbb{F}_r$ with $k_1>k_2$ such that $C_2\subset C_1$ and $C_2^{\perp}\subseteq C_2$. Define $f(x)=\frac{g_2(x)}{g_1(x)}$ over $\mathbb{F}_r[x]/(x^{n}-1)$. Then for any pair of nonnegative integers $c_l,\;c_r$ satisfying $c_l+c_r<\mathrm{ord}\left(f(x)\right)$, there exists a  $(c_l,c_r)$-$[[n+c_l+c_r,2k_2-n]]$ quantum synchronizable code that can correct up to $\left \lfloor \frac{d_1-1}{2} \right \rfloor$ bit errors and $\left \lfloor \frac{d_2-1}{2} \right \rfloor$ phase errors.
\end{lemma}

\subsection{Maximum misalignment tolerance}
\hspace{1,5em}According to Lemma \ref{theorem2}, the order of the polynomial $f(x)$ indicates the tolerable of the quantum synchronizable codes to misalignment errors. To get a good quantum synchronizable code, cyclic codes with good minimum distance are required, while ensuring $\mathrm{ord}\left(f(x)\right)$ to be as large as possible. Since $f(x)=\frac{g_2(x)}{g_1(x)}$, and $g_1(x)|(x^n-1)$, $g_2(x)=(x^n-1)$, then $f(x)|(x^n-1)$. Then the tolerable magnitude of the quantum synchronizable codes constructed is upper bounded by its length $n$.

\begin{lemma}\label{theorem3}\cite{2003FiniteField}
	Let $f(x)\in \mathbb{F}_r[x]$ be an irreducible polynomial over $\mathbb{F}_r$ of degree $\ell_k$ and with $f(0)\neq 0$. Then the order of $f(x)$ is equal to the order of any root of $f(x)$ in the multiplicative group $\mathbb{F}_{r^{\ell_k}} \backslash \{0\}$.
\end{lemma}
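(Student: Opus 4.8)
The plan is to pass back and forth between the polynomial $f(x)$ and a single root $\alpha$ of $f(x)$ in the field $\mathbb{F}_{r^{\ell_k}}$, exploiting the fact that $f$ is, up to its leading coefficient, the minimal polynomial of $\alpha$ over $\mathbb{F}_r$. First I would fix a root $\alpha$ of $f(x)$ in its splitting field; since $f$ is irreducible of degree $\ell_k$, that splitting field is exactly $\mathbb{F}_{r^{\ell_k}}$, and the complete set of roots of $f$ is the Frobenius orbit $\{\alpha,\alpha^r,\dots,\alpha^{r^{\ell_k-1}}\}$. Because $y\mapsto y^r$ is an automorphism of $\mathbb{F}_{r^{\ell_k}}$, it preserves multiplicative order, so all roots of $f$ have the same order in $\mathbb{F}_{r^{\ell_k}}\backslash\{0\}$; this is what justifies the phrase ``any root'' in the statement. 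The hypothesis $f(0)\neq 0$ guarantees $\alpha\neq 0$, so $\alpha$ genuinely lies in the multiplicative group and has a well-defined finite order $e$ (a divisor of $r^{\ell_k}-1$, since $\mathbb{F}_{r^{\ell_k}}\backslash\{0\}$ is cyclic of that order).

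Next I would establish the two inequalities relating $e=\mathrm{ord}(\alpha)$ and $\mathrm{ord}(f(x))$, where $\mathrm{ord}(f(x))$ denotes the least positive integer $c$ with $f(x)\mid x^c-1$. For one direction: from $\alpha^e=1$ we see that $\alpha$ is a root of $x^e-1\in\mathbb{F}_r[x]$; since $f$ is a scalar multiple of the minimal polynomial of $\alpha$ over $\mathbb{F}_r$, it divides every polynomial in $\mathbb{F}_r[x]$ vanishing at $\alpha$, hence $f(x)\mid x^e-1$, giving $\mathrm{ord}(f(x))\le e$. For the reverse: writing $c=\mathrm{ord}(f(x))$, the divisibility $f(x)\mid x^c-1$ forces $\alpha^c-1=0$, i.e.\ $\alpha^c=1$, so $e\mid c$ and therefore $e\le c$. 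Combining the two inequalities yields $\mathrm{ord}(f(x))=e=\mathrm{ord}(\alpha)$, which is the assertion.

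I do not expect a real obstacle: this is the standard correspondence between the defining root of a cyclic code and the order of its generator polynomial. The only two points deserving a word of care are (i) confirming that $\mathrm{ord}(f(x))$ is actually well-defined, which follows from $\gcd(f(x),x)=1$ (equivalently $f(0)\neq 0$) together with the finiteness of $\mathbb{F}_{r^{\ell_k}}\backslash\{0\}$, so that some $x^c-1$ is divisible by $f$; and (ii) being explicit that $f$ and the monic minimal polynomial of $\alpha$ differ only by a unit, so the divisibility claims transfer without change. Both are immediate, and the whole proof is therefore short.
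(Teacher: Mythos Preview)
Your argument is correct and is exactly the standard textbook proof of this fact. Note that the paper does not supply its own proof of this lemma at all: it is quoted as a known result from the reference \cite{2003FiniteField} (Lidl--Niederreiter, \emph{Finite Fields}), so there is nothing further to compare.
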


By Lemma \ref{lemma3}, it is clear that the size of cyclotomy coset $C_{(t,2q)}$ is $|C_{(t,2q)}|=\ell_k$, where $t\in \{1,2,3,...,q-1,q+1,...,2q-1\}$. Then the degree of irreducible polynomial over $\mathbb{F}_r$ is $\ell_k$. Moreover, Since $q=4m+3$ is an odd prime, and by $(\ref{e10})$, $(\ref{e11})$, the order of any root of $M_t(x)$ is $2q$, where $t\in D_0^{(2q)}\cup D_1^{(2q)}$. Then the order of $M_t(x)$ is $2q$.

\begin{theorem}\label{theorem4}
	Let $q=4m+3$ be an odd prime, $r\in D_0^{(2q)}$, and $\mathrm{ord}_{(2q)}(r)=\ell_k$, $\theta=\frac{q-1}{2\ell_k}$. Then for any nonnegative integers $c_l$ and $c_r$ satisfying $c_l+c_r <2q$, there exists a quantum synchronizable code with parameters $(c_l,c_r)$-$[[2q+c_l+c_r,q+2z\ell_k+1]]_r$, where $z$ is a nonnegative integer such that $0\leq z\leq  \theta-2=\frac{q-4\ell_k-1}{2\ell_k}$.
\end{theorem}
\begin{proof}
	Since $q=4m+3$ is an odd prime, then $|D_i^{(2q)}|=|E_i^{(2q)}|=\frac{q-1}{2}=2m+1$. By Lemma \ref{lemma3}, $|C_{t,2q}|=\ell_k$ for any $t\in \{1,2,...,q-1,q+1,...,2q-1\}$, and $\ell_k |(2m+1)$. Then it follows that dual-containing code $C_{D_i}^{(2q)}$ has augmented codes if and only if $g_{D_i}^{(2q)}(x)$ has at least $\theta=\frac{2m+1}{\ell_k}>1$ irreducible factors over $\mathbb{F}_r$.
	
	According to $(a)$ in Lemma \ref{lemma7}, the dual-containing code $C_{D_i}^{(2q)}$ has an augmented code $C_{D_i}'$ with parameters $[2q,2q-(\frac{q-1}{2}-|A|\ell_k)]_r=[2q,\frac{3q+1}{2}+|A|\ell_k]_r$. Furthermore, taking a set $A'$ such that $A\subset A'\subset \{d_{i_1},d_{i_2},...,d_{i_\theta}\}$, we can get the augmented cyclic code $C_{D_i}''$ of $C_{D_i}'$, and the parameter of $C_{D_i}''$ is $[2q,\frac{3q+1}{2}+|A'|\ell_k]_r$.
	
	Let $|A|=z$. Since $|A|<|A'|<\theta $, then $0\leq z\leq \theta-2=\frac{q-4\ell_k-1}{2\ell_k}$. Furthermore, since $\mathrm{ord}(M_t(x))=2q$ and $t\in D_0^{(2q)}\cup D_1^{(2q)}$, then the synchronization capabilities of quantum synchronizable codes obtained by $C_{D_i}^{(2q)}$ and their augmented cyclic codes reach the upper bound. By Lemma $\ref{theorem2}$ we can obtain a quantum synchronizable codes with parameters $(c_l,c_r)$-$[[2q+c_l+c_r,q+2z\ell_k+1]]_r$, where $c_l+c_r<2q$.
\end{proof}

Moreover, by using cyclic codes $C_{D_iE_j}^{(2q)}$ and their augmented codes, we can also construct a quantum synchronizable code whose synchronization capabilities reach the upper bound.

\begin{theorem}\label{theorem5}
	Let $q=4m+3$ be an odd prime, $r\in D_0^{(2q)}$, and $\mathrm{ord}_{(2q)}(r)=\ell_k$, $\theta=\frac{q-1}{2\ell_k}$. Then for any nonnegative integers $c_l$ and $c_r$ satisfying $c_l+c_r <2q$, there exists a quantum synchronizable code with parameters $(c_l,c_r)$-$[[2q+c_l+c_r,2y\ell_k+2]]_r$, where $y$ is a nonnegative integer such that $0\leq y\leq  \theta-2=\frac{q-4\ell_k-1}{2\ell_k}$.
\end{theorem}
\begin{proof}
	Since $q=4m+3$ is an odd prime,  $|D_i^{(2q)}|=|E_i^{(2q)}|=\frac{q-1}{2}=2m+1$. By Lemma \ref{lemma3}, $|C_{t,2q}|=\ell_k$ for any $t\in \{1,2,...,q-1,q+1,...,2q-1\}$, and $\ell_k |(2m+1)$. Then it follows that the dual-containing code $C_{D_iE_j}^{(2q)}$ has augmented codes if and only if $g_{D_i}^{(2q)}(x)g_{E_j}^{(2q)}(x)$ has at least $\theta = \frac{4m+2}{\ell_k}>1$ irreducible factors over $\mathbb{F}_r$.
	
	According to $(c)$ in Lemma \ref{lemma7}, the dual-containing code $C_{D_iE_j}^{(2q)}$ has an augmented code $C_{D_iE_j}'$ with parameter $[2q,q+|S|\ell_k+1]_r$. Moreover, taking a set $S'$ such that $S\subset S'\subset \{d_{i_1},d_{i_2},...,d_{i_\theta} \}\cup \{e_{i_1},e_{i_2},...,e_{i_\theta} \}$, we obtain the augmented cyclic codes $C_{D_iE_j}''$ of $C_{D_iE_j}'$, and $C_{D_iE_j}''=[2q,q+|S'|\ell_k+1]_r$.
	
	Let $|S|=y$. Since $|S|<|S'|<2\theta$, then $0\leq y\leq 2\theta-2=\frac{q-2\ell_k-1}{\ell_k}$. Note that $\mathrm{ord}(M_t(x))=2q$ and $t\in\{D_0^{(2q)}\cup D_1^{(2q)}\}$, then if the quotient of the generator polynomial of $C_{D_iE_j}'$ and $C_{D_iE_j}''$ is product of some $M_t(x),\;t\in D_0^{(2q)}\cup D_1^{(2q)}$. Hence we obtain a quantum synchronizable code with parameters $(c_l,c_r)$-$[[2q+c_l+c_r,2y\ell_k+2]]_r$ whose synchronization capabilities reach the upper bound.
\end{proof}

\begin{example}
	Let $q=4m+3=19$, and $r=11\in D_0^{(38)}$. In this example, we only consider the construction of quantum synchronizable codes from the cyclic code $C_{D_0E_0}^{(38)}=\langle g_{D_0}^{(38)}(x)g_{E_0}^{(38)}(x)\rangle$ and its augmented cyclic codes. Since $\ell_k=\mathrm{ord}_{38}(11)=3$, then $\theta = \frac{q-1}{2\ell_k}=3$. By Theorem \ref{theorem5}, $0\leq z \leq 4$.
	
	Let $z=4$, and $S=\{2,5,9,10\}$. Then the dual-containing code $C_{D_0E_0}'=\langle \frac{g_{D_0}^{(38)}(x)g_{E_0}^{(38)}(x)}{M_2(x)M_5(x)M_9(x)M_{10}(x)} \rangle$ has parameter $[38,32,5]_{11}$. Let $S'=\{1,2,5,9,10\}$. The code $C_{D_0E_0}''=\langle \frac{g_{D_0}^{(38)}(x)g_{E_0}^{(38)}(x)}{M_1(x)M_2(x)M_5(x)M_9(x)M_{10}(x)}\rangle$ is the augmented code of $C_{D_0E_0}'$. Moreover, $C_{D_0E_0}''$ has parameter $[38,35,2]_{11}$. Note that the quotient of the generator polynomial of $C_{D_0E_0}'$ and $C_{D_0E_0}''$ is $M_1(x)$, and $\mathrm{ord}(M_1(x))=38$ due to $1 \in D_0^{(38)}$. By Lemma \ref{theorem2}, and from ${C_{D_0E_0}'}^{\perp} \subset C_{D_0E_0}'\subset C_{D_0E_0}''$, we obtain a quantum synchronizable code with parameters $(c_l,c_r)$-$[[38+c_l+c_r,26]]_{11}$ and $c_l+c_r<38$, whose error-correcting and synchronization capabilities reach the upper bound.
\end{example}

\section{Conclusion}
\hspace{1.5em}In this paper, we constructed two classes of quantum synchronizable codes by using cyclotimic classes of order two over $\mathbb{Z}_{2q}$. In certain conditions, the synchronization capabilities of the codes reach the upper bound. Moreover, by choosing the optimal or almost optimal dual-containing codes, the quantum synchronizable codes also posses good error-correcting capability towards bit errors and phase errors. Finally, we constructed a quantum synchronizable code with parameters $(c_l,c_r)$-$[[38+c_l+c_r,26]]_{11}$ and $c_l+c_r<38$, whose synchronizable capability reach the upper bound. For future work, more quantum synchronizable codes from generalized cyclotomic classes over $\mathbb{Z}_n$ are worthy studying.

\bibliography{QSCs_From_Cyclotomic_Classes_of_Order_Two_over_Z2q}
\end{document}